\def\kF{k_{\rm{F}}}
\def\epsilonF{\epsilon_{\rm{F}}}
\def\sgn{{\rm{sgn\,}}}
\def\be{\begin{equation}}
\def\ee{\end{equation}}
\def\bea{\begin{eqnarray}}
\def\eea{\end{eqnarray}}
\def\bse{\begin{subequations}}
\def\ese{\end{subequations}}
\def\intR{\int_{\mathbbm{R}}}
\numberwithin{equation}{section}
\numberwithin{remark}{section}
\numberwithin{lemma}{section}
\numberwithin{proposition}{section}
\numberwithin{theorem}{section}
\numberwithin{corollary}{section}
\journalname{Communications in Mathematical Physics}
\begin{document}

\title{Rigorous results for the electrical conductivity due to electron-phonon scattering
}



\author{J. Amarel         \and
             D. Belitz          \and
             T.~R. Kirkpatrick
}


\institute{J. Amarel \at
              Dept. of Physics and Institute for Fundamental Science \\
              University of Oregon\\
              Eugene, OR 97403, USA\\
              \email{jamarel@uoregon.edu}           
           \and
           D. Belitz \at
              Dept. of Physics and Institute for Fundamental Science 
              and Materials Science Institute\\
              University of Oregon\\
              Eugene, OR 97403, USA\\
              \email{dbelitz@uoregon.edu}                 
            \and
           T.~R. Kirkpatrick \at
              Institute for Physical Science and Technology\\
              University of Maryland\\
              College Park, MD 20742, USA\\
              \email{tedkirkp@umd.edu}
}

\date{Received: date / Accepted: date}

\maketitle

\begin{abstract}
We present a rigorous solution of the Boltzmann equation for the electron-phonon scattering
problem in three spatial dimensions in the limit of low temperatures. The different temperature scaling of the various
scattering rates turns the temperature into a control parameter that is not available in
classical kinetic theory and allows for a rigorous proof of Bloch's $T^5$ law. The relation
between the Boltzmann equation and the Kubo formula is also discussed, as well as
implications for the scattering of electrons by excitations other than phonons.
\end{abstract}

\tableofcontents

\section{Introduction}
\label{sec:1}

\subsection{The electron-phonon problem}
\label{subsec:1.1}

Consider the scattering of conduction electrons in a bulk metal by excitations with a momentum-dependent 
resonance frequency $\omega_{\bm q}$; for definiteness, we will mainly consider long-wavelength acoustic phonons, in which
case $\omega_{\bm q} = c\vert{\bm q}\vert$ with $c$ the speed of sound. The scattering leads, for any nonzero
temperature $T$, to a finite electrical conductivity $\sigma$ that is given by a Drude formula
\bse
\label{1.1}
\be
\sigma(T) = n\,e^2\,\tau(T)/m
\label{1.1a}
\ee
with $n$, $e$, and $m$ the electronic number density, charge, and mass, respectively. The transport relaxation time
$\tau$ is given by the energy average of an inverse rate $\varphi$,
\be
\tau(T) = \frac{1}{2T} \intR d\epsilon\,w(\epsilon)\,\varphi(\epsilon)\ .
\label{1.1b}
\ee
\ese
Here the weight function
\be
w(\epsilon)  = f_0(\epsilon/T)\left[1 - f_0(\epsilon/T)\right] = -T\,\frac{\partial f_0(\epsilon/T)}{\partial\epsilon} = \frac{1}{4\cosh^2(\epsilon/2T)}
\label{1.2}
\ee
is defined in terms of the Fermi distribution function
\bse
\label{1.3}
\be
f_0(x) = 1/(e^{x}+1) \ .
\label{1.3a}
\ee
The corresponding distribution function for phonons in equilibrium is the Bose distribution
\be
n_0(x) = 1/(e^{x} - 1)\ .
\label{1.3b}
\ee
\ese
A standard way to determine the rate $\varphi$ is to start with the Boltzmann equation for the fermion distribution function $f$.
The deviation from the equilibrium distribution $f_0$ due to an external electric field ${\bm E}$ can be parameterized by a function $\Phi$,
\bse
\label{1.4}
\be
f({\bm k}) = f_0(\epsilon_{\bm k}/T) -\frac{\partial f_0(\epsilon_{\bm k}/T)}{\partial\epsilon_{\bm k}}\,\Phi({\bm k})\ .
\label{1.4a}
\ee
For calculating the conductivity it is further convenient to write
\be
\Phi({\bm k}) = (e/m){\bm E}\cdot{\bm k}\,\varphi(\epsilon_{\bm k})\ .
\label{1.4b}
\ee
\ese
Here ${\bm k} \in {\mathbbm{R}}^3$ labels the momentum of a single-electron state with energy $\epsilon_{\bm k}$, and $\varphi$ is the inverse rate
that determines the transport scattering time via \eqref{1.1b}.
\begin{remark}
For simplicity we assume a parabolic band, $\epsilon_{\bm k} = {\bm k}^2/2m$, and we use units such that $\hbar = k_{\text{B}} = 1$.
\label{remark:1.1}
\end{remark}
The Boltzmann equation reads \cite{Wilson_1954,Ziman_1960}
\be
-e{\bm E}\cdot{\partial_{\bm k}}f({\bm k}) = (\partial f/\partial t)_{\text{coll}}({\bm k})\ .
\label{1.5}
\ee
The collision operator on the right-hand side describes the change of the distribution function per unit time due to the scattering
by the phonons and balances the streaming term on the left-hand side. To linear order in $\Phi$, which suffices for calculating
the conductivity, it reads (\cite{Wilson_1954} Secs. $8\!\cdot\! 1$ and $9\!\cdot\! 34$)
\bse
\label{1.6}
\be
(\partial f/\partial t)_{\text{coll}}^{\text{lin}}({\bm k}) \equiv (C f)({\bm k}) = \frac{1}{\kF^3} \int_{\mathbbm{R}^3} d{\bm k}'\ W({\bm k},{\bm k}')\,\left[\Phi({\bm k}) - \Phi({\bm k}')\right]\ .
\label{1.6a}
\ee
Omitting a numerical prefactor of order unity, the kernel $W$ reads
\bea
W({\bm k},{\bm k}') &=& \frac{\omega_0 \epsilonF}{T} \frac{\vert {\bm k} - {\bm k}'\vert^2}{2m} \frac{n_0(\omega_{{\bm k}-{\bm k}'}/T)}{\omega_{{\bm k}-{\bm k}'}}
\nonumber\\
&& \times
     \left[f_0(\epsilon_{\bm k}/T)\left(1 - f_0(\epsilon_{{\bm k}'}/T)\right) \delta\left(\epsilon_{\bm k} - \epsilon_{{\bm k}'} + \omega_{{\bm k}-{\bm k}'}\right) \right.  
\nonumber\\
   && \hskip 10pt + \left. f_0(\epsilon_{{\bm k}'}/T)\left(1 - f_0(\epsilon_{\bm k}/T)\right) \delta\left(\epsilon_{\bm k} - \epsilon_{{\bm k}'} - \omega_{{\bm k}-{\bm k}'}\right) \right]\ .
\label{1.6b}
\eea
\ese
Here $\kF$ and $\epsilonF$ are the Fermi wave number and energy, respectively, and $\omega_0 = c\kF$ is the characteristic phonon
energy scale, which is on the order of the Debye temperature. 
\begin{remark}
This treatment of the electron-phonon problem assumes that the phonons remain in thermal
equilibrium. This (unjustifiable) approximation goes back to Bloch (see \cite{Ziman_1960} p. 358). 
A treatment of the complete coupled system is
more complicated \cite{Ziman_1960,Ashcroft_Mermin_1976}.
\label{remark:1.2}
\end{remark}
\begin{remark}
The kernel $W$ is symmetric, and hence the collision operator $C$ is self-adjoint in the space $L^2$ of square-integrable
functions.
\label{remark:1.3}
\end{remark}
\begin{remark}
$C$ has an obvious zero eigenvalue. The corresponding eigenfunction is given by the distribution function with $\Phi({\bm k}) \equiv \text{const.}$
This reflects electron number conservation.
\label{remark:1.4}
\end{remark}
\begin{remark}
The electron momentum is {\em not} conserved, since the phonons can absorb momentum. Accordingly, if $C$ acts on $\Phi$
as written in \eqref{1.4b} with $\varphi(\epsilon_{\bm k}) = \text{const.}$, the result is not zero. However, since $\vert{\bm k}-{\bm k}'\vert$
scales as the temperature (for phonons, or as some positive power of the temperature in general) due to the Bose distribution function 
in \eqref{1.6b}, electron momentum is approximately conserved in the limit $T\to 0$. This will be important later.
\label{remark:1.5}
\end{remark}

\subsection{An integral equation for the scattering rate}
\label{subsec:1.2}

\subsubsection{Scattering by a generic potential}
\label{subsubsec:1.2.1}

For a calculation of the electrical conductivity it is convenient to recast the linearized Boltzmann equation
\eqref{1.5}, \eqref{1.6} in the form of an integral equation for the inverse rate $\varphi$ \cite{Wilson_1954}. At this point it is
easy to generalize the problem to the scattering of the conduction electrons by a given dynamical potential 
$V({\bm p},z)$, with $z$ a complex frequency, and $V''({\bm p},u) = \lim_{\delta\to 0} [V({\bm p},u+i\delta) - V({\bm p},u-i\delta)]/2i$ 
the spectrum of the potential. Inserting \eqref{1.4b} into \eqref{1.6a}, and performing the angular integration, leads to an integral equation
\be
\varphi(\epsilon) = 1/\Gamma_0(\epsilon) + \intR du\,K(\epsilon,u)\,\varphi(u)\ .
\label{1.7}
\ee
The kernel $K$ consists of three contributions,
\be
K(\epsilon,u) = K_0(\epsilon,u) + K_1(\epsilon,u) - K_2(\epsilon,u) 
\label{1.8}
\ee
that are defined as
\bse
\label{1.9}
\be
K_{\nu}(\epsilon,u) = {\bar K}_{\nu}(\epsilon,u)/\Gamma_0(\epsilon) \qquad (\nu=0,1,2)
\label{1.9a}
\ee
where
\be
{\bar K}_{\nu}(\epsilon,u) = \left[ n\left(\frac{u-\epsilon}{T}\right) + f\left(\frac{u}{T}\right)\right] \, {\bar V}''_{\nu}(u-\epsilon)
\label{1.9b}
\ee
\ese
with
\bse
\label{1.10}
\bea
{\bar V}''_0(u) &=& \frac{1}{2\kF^2} \int_0^{2\kF} dp\,p\, V''({\bf p},u)
\label{1.10a}\\
{\bar V}''_1(u) &=& \frac{u}{2\epsilonF}\,{\bar V}''_0(u)
\label{1.10b}\\
{\bar V}''_2(u) &=& \frac{1}{2\kF^2} \int_0^{2\kF} dp\,p\, (p^2/2\kF^2)\,V''({\bf p},u)
\label{1.10c}
\eea
\ese
Here we have normalized the potential by means of the electronic density of states such that the ${\bar V}_{\nu}''$ 
are dimensionless. The function $\Gamma_0$ in \eqref{1.7}, \eqref{1.9a} is defined as
\be
\Gamma_0(\epsilon) = \intR du\,{\bar K}_0(\epsilon,u)\ .
\label{1.11}
\ee
\begin{remark} 
Physically, $\Gamma_0$ is the single-particle relaxation rate.
\label{remark:1.6}
\end{remark}
\begin{remark}
Physically, the potential $V$ describes an effective elelectron-electron interaction that is mediated
by the exchange of bosonic excitations that have been integrated out. ${\bar V}_{0,2}''$ are 
odd functions of their argument, ${\bar V}''_{0,2}(-u) = -{\bar V}''_{0,2}(u)$, while
${\bar V}''_1(-u) = {\bar V}''_1(u)$ is even.  As a result, $K_{0,2}$ are positive semi-definite
with $K_{0,2}(-\epsilon,-u) = K_{0,2}(\epsilon,u)$, while $K_1(-\epsilon,-u) = - K_1(\epsilon,u)$.
\label{remark:1.7}
\end{remark}

\subsubsection{Electron-phonon scattering}
\label{subsubsec:1.2.2}

Specializing to phonons again, the basic potential spectrum defined in \eqref{1.10a} becomes
\be
{\bar V}_0''(u) \propto(u/\omega_0)^2\ \sgn u
\label{1.12}
\ee
and absorbing a numerical prefactor into $K_0$ we have
\bse
\label{1.13}
\be
{\bar K}_0(\epsilon,u) = \left(\frac{T}{\omega_0}\right)^2 {\bar k}_0\left(\frac{\epsilon}{T},\frac{u-\epsilon}{T}\right)
\label{1.13a}
\ee
with
\be
{\bar k}_0(x,y) = \left[ n(y) + f(y+x)\right] y^2\,\sgn (y)\ .
\label{1.13b}
\ee
The other two kernels are
\bea
{\bar K}_1(\epsilon,u) &=& \frac{\omega_0}{2\epsilonF}\,\frac{T}{\omega_0}\,\frac{u-\epsilon}{T}\,{\bar K_0}(\epsilon,u)
\label{1.13c}\\
{\bar K}_2(\epsilon,u) &=& \frac{1}{2} \left(\frac{T}{\omega_0}\right)^2\,\left(\frac{u-\epsilon}{T}\right)^2\,{\bar K}_0(\epsilon,u)\ .
\label{1.13d}
\eea
\ese
The function $\Gamma_0$ from \eqref{1.11} now has the form
\bse
\label{1.14}
\be
\Gamma_0(\epsilon) = (T^3/\omega_0^2)\,\gamma_0(\epsilon/T)
\label{1.14a}
\ee
and for future reference we define
\bea
\Gamma_1(\epsilon) &=& \intR du\,{\bar K}_1(\epsilon,u) = (T^4/\omega_0^2\epsilonF)\,\gamma_1(\epsilon/T)
\label{1.14b}\\
\Gamma_2(\epsilon) &=& \intR du\,{\bar K}_2(\epsilon,u) = (T^5/\omega_0^4)\,\gamma_2(\epsilon/T)
\label{1.14c}
\eea
where
\be
\gamma_{\nu}(\epsilon) = \intR du\,u^{\nu}\,{\bar k}_0(\epsilon,u) \quad (\nu=0,1,2)\ .
\label{1.14d}
\ee
\ese
The transport problem is now completely defined and consists of solving the integral equation \eqref{1.7}
in the limit $T\to 0$. This has been done in various approximations that yield the well-known Bloch law 
$\sigma(T\to 0) \propto T^{-5}$ \cite{Bloch_1930,Wilson_1954,Ziman_1960,Mahan_2000}.
Here we provide a rigorous proof of the Bloch law.  A sketch of a proof was given in \cite{Belitz_Kirkpatrick_2010a}, 
but the treatment was not rigorous due to various hidden assumptions.
\begin{remark}
\eqref{1.7} with the kernels given by \eqref{1.13} is identical to \cite{Wilson_1954} (9$\cdot$61$\cdot$1)
except for terms that are exponentially small of order $\exp(-\omega_0/T)$. 
As outlined above, it represents the linearized Boltzmann equation for the electron-phonon scattering
problem in an approximation that treats the phonons as staying in equilibrium. 
\label{remark:1.8}
\end{remark}
\begin{remark}
Alternatively, the same equation can be derived from the Kubo formula \cite{Kubo_1957} which
expresses $\sigma$ as the current-current Kubo function at zero wave number in the limit of zero frequency.
We recall that the full, nonlinear, Boltzmann equation is exact to linear order in the scattering cross-section and to all orders
in the external field, and the linearized Boltzmann equation is exact to linear order in both the scattering cross-section
and the external field. The Kubo formula is exact to linear order in the external field and to all orders in the scattering
potential. That is, it provides an exact expression for the transport coefficient. If one evaluates it to linear order
in the scattering cross-section, one therefore recovers the linearized Boltzmann equation \cite{Holstein_1964}.
\label{remark:1.9}
\end{remark}
\begin{remark}
The Boltzmann equation is exact for fixed time (or frequen\-cy/tem\-perature) in the limit of a vanishing 
scattering cross-section, but not for a fixed cross-section, no matter how small, in the limit of long
times or small frequency/temperature. A manifestation of this is the existence of long-time tails in
equilibrium time correlation functions \cite{Alder_Wainwright_1970,Dorfman_Cohen_1970,Ernst_Hauge_van_Leeuwen_1970}
in both classical and quantum fluids \cite{Kirkpatrick_Belitz_Sengers_2002}. A rigorous solution of the
Boltzmann equation therefore does not necessarily provide the exact behavior of the transport
coefficient under consideration.
\label{remark:1.9.1}
\end{remark}
\begin{remark}
The equivalence between the Boltzmann equation and the Kubo formula mentioned in Remark~\ref{remark:1.9} is violated by derivations that
neglect the kernel $K_1$ (e.g., \cite{Takegahara_Wang_1977,Hansch_Mahan_1983}), an omission that is sometimes based on the
argument that $K_1$ is suppressed by the small prefactor $\omega_0/2\epsilonF$ in \eqref{1.13c}.  This argument is {\em not} valid a priori.
The leading temperature dependence of $\sigma$ in the absence of $K_1$ is determined by $K_2$,
and the temperature scaling apparent in \eqref{1.13} shows that $K_1$ has the potential for providing
the leading contribution to $\sigma$ for $T \ll \omega_0^2/\epsilonF$. As we will see, for phonons it actually provides a
contribution with the same temperature scaling as the one coming from $K_2$, but this requires a proof.
For other excitations it can provide the leading temperature scaling, see Remark~\ref{remark:4.6}.
\label{remark:1.10}
\end{remark}
\begin{remark}
The different temperature scaling of the kernels $K_{1,2,3}$ that is apparent from \eqref{1.13} allows for
a rigorous solution of \eqref{1.7} in the limit $T\to 0$, as we will show in Sec.~\ref{sec:3}. This is a
crucial difference between the quantum kinetic problem we are studying and classical kinetic theory,
where no analogous limit exists.
\label{remark:1.11}
\end{remark}

\section{A formal solution of the integral equation}
\label{sec:2}

Here we provide a formal solution of the integral equation \eqref{1.7} with the kernels given by \eqref{1.13} 
that leads to Bloch's $T^5$ law. The rigorous solution in Sec.~\ref{sec:3} will follow the same logic.

We start by stating basic symmetry relations of the kernels that we will also use repeatedly in Sec.~\ref{sec:3}.
They can be verified by explicit elementary calculations.
\begin{lemma}
The kernels obey
\bse
\label{2.1}
\bea
w(\epsilon) {\bar K}_{0,2}(\epsilon,u) &=& w(u) {\bar K}_{0,2}(u,\epsilon)\ ,
\label{2.1a}\\
w(\epsilon) {\bar K}_{1}(\epsilon,u) &=& -w(u) {\bar K}_{1}(u,\epsilon)\ .
\label{2.1b}
\eea
\ese
\label{lemma:2.1}
\end{lemma}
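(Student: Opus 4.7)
The plan is to reduce both identities in Lemma~\ref{lemma:2.1} to a single algebraic identity for the Fermi--Bose weight factor, and then to verify that identity by an elementary computation. The $\nu$-dependence of $\bar K_\nu(\epsilon,u)$ sits entirely in the factor $\bar V''_\nu(u-\epsilon)$, whose parity under $u-\epsilon \to -(u-\epsilon)$ (Remark~\ref{remark:1.7}) determines whether the resulting symmetry of $w\bar K_\nu$ is even or odd under $\epsilon\leftrightarrow u$.

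Concretely, I would first write out $w(\epsilon)\bar K_\nu(\epsilon,u)$ and $w(u)\bar K_\nu(u,\epsilon)$ using \eqref{1.9b}. For $\nu\in\{0,2\}$ the relation $\bar V''_\nu(\epsilon-u) = -\bar V''_\nu(u-\epsilon)$ allows me to factor out $\bar V''_\nu(u-\epsilon)$ with a relative minus sign; for $\nu=1$ the evenness $\bar V''_1(\epsilon-u) = \bar V''_1(u-\epsilon)$ combined with the opposite sign on the right of \eqref{2.1b} produces the same factorization. In either case the lemma reduces to the single scalar identity
\be
w(\epsilon)\bigl[n_0((u-\epsilon)/T) + f_0(u/T)\bigr] \;=\; -\,w(u)\bigl[n_0((\epsilon-u)/T) + f_0(\epsilon/T)\bigr]\ .
\label{planmaster}
\ee

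To verify \eqref{planmaster} I set $x=\epsilon/T$ and $y=(u-\epsilon)/T$, use $n_0(-y) = -1 - n_0(y)$ together with $w(\epsilon) = e^x/(e^x+1)^2$ and the explicit forms of $n_0,f_0$, and bring each side over the common denominator $(e^x+1)(e^y-1)(e^{x+y}+1)$. Both sides then reduce to $e^{x+y}$ in the numerator, which proves the identity. Equivalently, after clearing denominators \eqref{planmaster} is a repackaging of the familiar detailed-balance relation $f_0(\epsilon/T)\bigl(1-f_0(u/T)\bigr)\,n_0((u-\epsilon)/T) = f_0(u/T)\bigl(1-f_0(\epsilon/T)\bigr)\bigl(1+n_0((u-\epsilon)/T)\bigr)$.

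I do not anticipate any real obstacle; the only care required is bookkeeping of the two sign sources, the parity of $\bar V''_\nu$ and the identity $n_0(-y) = -1 - n_0(y)$, and checking that they combine to yield the same master identity for both the symmetric case $\nu\in\{0,2\}$ and the antisymmetric case $\nu=1$. Once the reduction to \eqref{planmaster} is made, the remainder is a short exponential calculation.
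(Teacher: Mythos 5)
Your proposal is correct: the reduction via the parity of $\bar V''_\nu$ to the single identity $w(\epsilon)\,[n_0((u-\epsilon)/T)+f_0(u/T)] = -\,w(u)\,[n_0((\epsilon-u)/T)+f_0(\epsilon/T)]$ is valid, and the exponential computation (both sides equal $e^{x+y}/[(e^x+1)(e^y-1)(e^{x+y}+1)]$ with $x=\epsilon/T$, $y=(u-\epsilon)/T$) goes through. This is exactly the "explicit elementary calculation" the paper invokes without writing out, so your route matches the paper's.
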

We now turn to \eqref{1.7}, which is a Fredholm integral equation of the second kind. It has the form
\be
\bar{\cal C}\vert\varphi) = - \vert 1)\ .
\label{2.2}
\ee
$\bar{\cal C} = {\bar K} - \Gamma_0 \mathbbm{1}$ is a collision operator in the Hilbert space of functions
that are square integrable with respect to the weight $w$; i.e., the scalar product is $(\varphi\vert\psi) = \intR w(\epsilon)\, \varphi(\epsilon)\,\psi(\epsilon)$.
$\mathbbm{1}$ is the identity operator, 
$\vert\varphi)$ is a vector in the Hilbert space that is represented by the unknown function $\varphi$,
$\Gamma_0$ is the function given by \eqref{1.14a}, and $\vert 1)$ is the vector that is represented 
by the constant function identically equal to $1$. ${\bar K}_0$ and ${\bar K}_2$ are self-adjoint with 
respect to the scalar product $(\ \vert\ )$.
Let $\{ \vert e_n) \}$ be a basis in the Hilbert 
space, and assume that $\bar{\cal C}$ has a spectral representation
\be
\bar{\cal C} = \sum_n \mu_n\,\vert e_n)( e_n\vert
\label{2.3}
\ee
with eigenvalues $\mu_n$.
The kernels ${\bar K}_{0,1,2}$ scale with different powers of the temperature, see \eqref{1.13},
which facilitates a low-temperature expansion. In order to easily compare coefficients,
we write
\be
{\bar K} = {\bar K}_0 + \alpha {\bar K}_1 - \alpha^2 {\bar K}_2\ ,
\label{2.4}
\ee
treat $\alpha$ as a small parameter, and put $\alpha=1$ in the end. Now consider the eigenproblem
\be
\bar{\cal C} \vert e_n) = \mu_n \vert e_n)
\label{2.5}
\ee
and expand the eigenvalues and eigenvectors in powers of $\alpha$:
\bse
\label{2.6}
\bea
\mu_n &=& \mu_n^{(0)} + \alpha\,\mu_n^{(1)} + \alpha^2\,\mu_n^{(2)} + O(\alpha^3)
\label{2.6a}\\
\vert e_n) &=& \vert e_n^{(0)}) + \alpha \vert e_n^{(1)}) + \alpha^2 \vert e_n^{(2)}) + O(\alpha^3)\ .
\label{2.6b}
\eea
\ese
$\bar{\cal C}$ will be dominated by the eigenvalue with the smallest absolute value, which we denote by $\mu_0$.
To zeroth order in $\alpha$ we have, from the definition of $\Gamma_0$, \eqref{1.11},
\bse
\label{2.7}
\bea
\mu_0^{(0)} &=& 0\ ,
\label{2.7a}\\
\vert e_0^{(0)}) &=& \vert 1)\ .
\label{2.7b}
\eea
\ese
To linear order in $\alpha$ we have
\bse
\label{2.8}
\be
\bar{\cal C}_0\vert e_0^{(1)}) + {\bar K}_1 \vert 1) = \mu_0^{(1)}\vert 1)\ ,
\label{2.8a}
\ee
where $\bar{\cal C}_0 = {\bar K}_0 - \Gamma_0\mathbbm{1}$.
Multiplying from the left with $(1\vert$, and using $( 1\vert \bar{\cal C}_0 = 0$, we obtain
\be
\mu_0^{(1)} = 0
\label{2.8b}
\ee
since $( 1 \vert {\bar K}_1 \vert 1) = 0$ by symmetry. For the corresponding eigenvector we have the formal expression
\be
\vert e_0^{(1)}) = -\bar{\cal C}_0^{-1} {\bar K}_1 \vert 1) = -\bar{\cal C}_0^{-1}\vert\Gamma_1)
\label{2.8c}
\ee
where we have used \eqref{1.14b}. The inverse operator 
\be
\bar{\cal C}_0^{-1} = \frac{1}{\Gamma_0} (K_0 - \mathbbm{1})^{-1} = \frac{-1}{\Gamma_0}\sum_{m=0}^{\infty} (K_0)^m
\label{2.8d}
\ee
\ese
exists in this context in a formal sense since the zero eigenvalue does not contribute by symmetry. 
To second order in $\alpha$ we have
\be
\bar{\cal C}_0 \vert e_0^{(2)}) + {\bar K}_1 \vert e_0^{(1)}) - {\bar K}_2 \vert 1) = \mu_0^{(2)} \vert 1)\ .
\label{2.9}
\ee
Again multiplying from the left with $( 1\vert$ again we find
\bea
(1\vert 1)\,\mu_0^{(2)} &=& - ( 1\vert {\bar K}_2\vert 1) +( 1 \vert {\bar K}_1 \vert e_0^{(1)})
\nonumber\\
&=& - ( 1\vert \Gamma_2) + ( \Gamma_1\vert  \bar{\cal C}_0^{-1} \vert \Gamma_1)
\nonumber\\
&=& - ( 1 \vert \Gamma_2) - \sum_{m=0}^{\infty} ( \Gamma_1/\Gamma_0 \vert (K_0)^m  \vert \Gamma_1)
\label{2.10}
\eea
where we have used \eqref{2.8c}, \eqref{1.14}, and Lemma~\ref{lemma:2.1}. If we denote the average of functions $\Gamma$ with respect to the
weight $w$ by $\langle\Gamma\rangle_w = \intR d\epsilon\,w(\epsilon)\,\Gamma(\epsilon)$, and use Lemma~\ref{lemma:2.1} again, this can be written
\be
\mu_0^{(2)} = -\langle\Gamma_2\rangle_w - \sum_{m=0}^{\infty} \langle\Gamma_1 (K_0)^m \Gamma_1/\Gamma_0\rangle_w\ .
\label{2.11}
\ee
For the solution of \eqref{2.2} to lowest order in $\alpha$, or $T$, we now have
\bse
\label{2.12}
\be
\vert\varphi) = \phi \vert 1)
\label{2.12a}
\ee
with
\be
\phi = -1/\mu_0^{(2)}\ .
\label{2.12b}
\ee
\ese
For the conductivity, $\sigma(T) \propto (1\vert\varphi)$, this implies
\be
\sigma(T) \propto \phi = -1/\mu_0^{(2)}\ .
\label{2.13}
\ee
We now recall the temperature scaling of the various quantities. The rates $\Gamma_{0,1,2}$ scale as
$\Gamma_{\nu} \sim T^{\nu+3}$, see \eqref{1.14}, and the kernel $K_0$ scales as $T^0$. Accordingly,
\be
\sigma(T\to 0) \propto T^{-5}
\label{2.14}
\ee
to lowest order in $T$.
\begin{remark}
With the help of \eqref{1.14} one sees explicitly that the second term in \eqref{2.11} is small
compared to the first term by a factor of $(\omega_0/\epsilonF)^2$: $\langle\Gamma_2\rangle_w \propto T^5/\omega_0^4$,
while  $\langle\Gamma_1 (K_0)^m \Gamma_1/\Gamma_0\rangle_w \propto T^5/\omega_0^2 \epsilonF^2$.
\end{remark}
In the next section we show that the scheme outlined above, with minor modifications, can be made rigorous.

\section{Rigorous solution of the integral equation}
\label{sec:3}

\subsection{Preliminaries}
\label{subsec:3.1}

We list some useful properties of the kernels and the weight function $w$. The proofs are by
means of explicit elementary calculations.
\begin{lemma}
The function $\gamma_0$ defined in \eqref{1.14a} is even, positive definite, increases monotonically from its minimum at $\epsilon=0$, and
\be
\gamma_0(x\to\infty) \propto x^3\ .
\label{3.1}
\ee
\label{lemma:2.2}
\end{lemma}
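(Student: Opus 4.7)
The plan is to rewrite $\gamma_0$ in a form that makes all four properties manifest. First I would split the integral in \eqref{1.14d} with $\nu=0$ into the halves $y>0$ and $y<0$, substitute $y\to -y$ in the negative half, and apply the elementary identities $n(-y)=-1-n(y)$ and $f(-y)=1-f(y)$. A short manipulation turns the combination $[n(y)+f(y+x)]\,y^2\,\sgn(y)$ into a single non-negative integrand, yielding
\[
\gamma_0(x) = \int_0^{\infty} dy\,y^2 \left[2n(y) + f(y+x) + f(y-x)\right].
\]
From this representation three of the four claims essentially read off: evenness holds because $f(y+x)+f(y-x)$ is symmetric under $x\to -x$, and positive definiteness holds because every term in the bracket is strictly positive.

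For monotonicity I would differentiate under the integral sign,
\[
\gamma_0'(x) = \int_0^{\infty} dy\,y^2 \left[f'(y+x) - f'(y-x)\right],
\]
and invoke that $|f'(z)| = 1/(4\cosh^2(z/2))$ is even in $z$ and strictly decreasing in $|z|$. For $x,y>0$ we have $y+x>|y-x|$, so $|f'(y+x)| < |f'(y-x)|$; combined with $f'<0$ this makes the integrand strictly positive, hence $\gamma_0'(x) > 0$ on $(0,\infty)$. Evenness then pins the unique minimum at $x=0$.

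For the asymptotics only the $f(y-x)$ term survives to leading order: the contribution $\int_0^{\infty} y^2 [2n(y)+f(y+x)]\,dy$ is bounded and in fact tends to a finite limit as $x\to\infty$. For the remaining piece I would write $\int_0^{\infty} dy\,y^2 f(y-x) = x^3/3 + \int_0^{\infty} dy\,y^2 [f(y-x) - \theta(x-y)]$, then shift $y=x+w$ and expand $(x+w)^2 = x^2 + 2xw + w^2$. The factor $f(w) - \theta(-w)$ is odd in $w$ and decays exponentially in $|w|$, so the $x^2$ coefficient vanishes identically while the $x$ and constant coefficients are finite, giving $\gamma_0(x) = x^3/3 + O(x)$ as $x\to\infty$, and in particular $\gamma_0(x\to\infty)\propto x^3$.

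The only real obstacle is the bookkeeping in the first step: the integrand in \eqref{1.14d} is not of a single sign, and the correct use of the Bose/Fermi reflection identities is what reorganizes it into the display above and makes the remaining three claims follow by inspection plus one elementary asymptotic estimate.
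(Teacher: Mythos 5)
Your proof is correct, and it is exactly the kind of explicit elementary calculation the paper invokes without writing out (the paper simply asserts that the proofs of these preliminary lemmas are elementary): the reduction via $n(-y)=-1-n(y)$ and $f(-y)=1-f(y)$ to $\gamma_0(x)=\int_0^\infty dy\,y^2\left[2n(y)+f(y+x)+f(y-x)\right]$, the monotonicity argument from $\vert f'(z)\vert = 1/(4\cosh^2(z/2))$ being decreasing in $\vert z\vert$ together with $y+x>\vert y-x\vert$, and the Sommerfeld-type estimate $\gamma_0(x)=x^3/3+O(x)$ are all sound. One small quibble with your closing remark: the original integrand ${\bar k}_0(x,y)$ is in fact already non-negative for every $y$ (this is the content of the paper's Lemma~\ref{lemma:3.2}, via the $1/\sinh$ representation), so positive definiteness does not actually require your rearrangement, although evenness, monotonicity, and the asymptotics do.
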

\begin{lemma}
The distribution functions in the definition of the ${\bar K}_{\nu}$, \eqref{1.9b}, can be written
\be
n\left(\frac{u-\epsilon}{T}\right) + f\left(\frac{u}{T}\right) = \frac{\sqrt{w(u)/w(\epsilon)}}{2\sinh((u-\epsilon)/2T)}\ ,
\label{3.2}
\ee
and hence ${\bar K}_{0,2}(\epsilon,u)\geq 0$, and ${\bar K}_{0,2}(\epsilon,u) = 0$ if and only if $\epsilon=u$.
\label{lemma:3.2}
\end{lemma}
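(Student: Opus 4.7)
The statement bundles together an algebraic identity for $n+f$ and a sign statement for $\bar{K}_{0,2}$; both reduce to elementary manipulations of exponentials, and my plan is to dispatch them in that order.

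For the identity \eqref{3.2}, I would bring the left-hand side over a common denominator, using $n(x)=1/(e^{x}-1)$ and $f(x)=1/(e^{x}+1)$, to obtain
\begin{equation*}
n\!\left(\tfrac{u-\epsilon}{T}\right) + f\!\left(\tfrac{u}{T}\right) \;=\; \frac{e^{u/T}+e^{(u-\epsilon)/T}}{(e^{(u-\epsilon)/T}-1)(e^{u/T}+1)}\ ,
\end{equation*}
and then symmetrize each exponential about the midpoint $(2u-\epsilon)/2T$. Extracting $e^{(2u-\epsilon)/2T}$ from both numerator and denominator converts the numerator into $2\cosh(\epsilon/2T)$ and the denominator into $4\sinh((u-\epsilon)/2T)\cosh(u/2T)$. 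Invoking $w(\epsilon)=1/[4\cosh^{2}(\epsilon/2T)]$ from \eqref{1.2}, one recognizes $\cosh(\epsilon/2T)/\cosh(u/2T)=\sqrt{w(u)/w(\epsilon)}$, which is precisely the right-hand side of \eqref{3.2}. Every step is an equality of elementary functions, so the argument is reversible and needs no regularity hypothesis.

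For the nonnegativity of $\bar{K}_{0}$, I would read off from \eqref{1.12} that $\bar{V}''_{0}(u-\epsilon)$ has the sign of $u-\epsilon$ and vanishes only at $u=\epsilon$. The identity just proved shows that the bracket $n+f$ in \eqref{1.9b} is a strictly positive prefactor $\sqrt{w(u)/w(\epsilon)}/2$ times $1/\sinh((u-\epsilon)/2T)$, which itself carries the sign of $u-\epsilon$. The two sign factors therefore cancel, and the product $\bar{K}_{0}(\epsilon,u)$ is nonnegative and vanishes iff $u=\epsilon$. For $\bar{K}_{2}$, relation \eqref{1.13d} exhibits it as $\tfrac12(T/\omega_{0})^{2}((u-\epsilon)/T)^{2}\,\bar{K}_{0}(\epsilon,u)$, a manifestly nonnegative multiple of $\bar{K}_{0}$ with the same zero set, completing the claim.

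No step is technically hard; the only place where care is needed is the exponential bookkeeping for the identity, and I would organize that bookkeeping around the half-argument $(2u-\epsilon)/2T$ so that the $\cosh$ and $\sinh$ factors emerge symmetrically and the square-root factor $\sqrt{w(u)/w(\epsilon)}$ is immediately visible.
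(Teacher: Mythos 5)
Your proposal is correct and is exactly the ``explicit elementary calculation'' the paper alludes to without writing out: the half-argument factorization yields $2\cosh(\epsilon/2T)$ over $4\sinh((u-\epsilon)/2T)\cosh(u/2T)$, the $\cosh$ ratio is $\sqrt{w(u)/w(\epsilon)}$ by \eqref{1.2}, and the sign of $1/\sinh((u-\epsilon)/2T)$ cancels against that of $\bar V_0''(u-\epsilon)\propto (u-\epsilon)^2\,\sgn(u-\epsilon)$, with the $K_2$ case following from \eqref{1.13d}. No gaps.
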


\subsection{Properties of the kernel $K_0$}
\label{subsec:3.2}

We now investigate the properties of the kernel $K_0$. Consider the space $L^2$ of real-valued square-integrable functions $\hat\varphi$ with
scalar product
\be
(\hat\varphi,\hat\psi) = \intR d\epsilon\,\hat\varphi(\epsilon)\hat\psi(\epsilon)\ ,
\label{3.3}
\ee
i.e., $(\hat\varphi,\hat\varphi) < \infty \ \forall \hat\varphi\in L^2$. In $L^2$ we define an operator with kernel
\be
\hat K_0(\epsilon,u) = \sqrt{w(\epsilon)\Gamma_0(\epsilon)}\,K_0(\epsilon,u)/\sqrt{w(u)\Gamma_0(u)}\ .
\label{3.4}
\ee
\begin{proposition}
${\hat K}_0$ is compact, or completely continuous, in $L^2$.
\label{proposition:3.0}
\end{proposition}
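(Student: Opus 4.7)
The plan is to prove the stronger statement that $\hat K_0$ is Hilbert--Schmidt on $L^2({\mathbbm{R}})$, from which compactness follows by the standard Hilbert--Schmidt criterion. The first step is to reduce the symmetrized kernel \eqref{3.4} to an explicit closed form. Inserting the identity of Lemma~\ref{lemma:3.2} into $\bar K_0(\epsilon,u) = [n((u-\epsilon)/T)+f(u/T)]\,\bar V_0''(u-\epsilon)$ produces a factor $\sqrt{w(u)/w(\epsilon)}$ that exactly cancels the analogous factor $\sqrt{w(\epsilon)/w(u)}$ built into the symmetrization \eqref{3.4} when combined with $K_0 = \bar K_0/\Gamma_0$. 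This yields
\be
\hat K_0(\epsilon,u) = \frac{\bar V_0''(u-\epsilon)}{2\sinh((u-\epsilon)/2T)\,\sqrt{\Gamma_0(\epsilon)\,\Gamma_0(u)}}\ ,
\ee
which is manifestly symmetric in $(\epsilon,u)$ because both $\bar V_0''$ and $\sinh$ are odd.

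Next I would collect two independent bounds. By \eqref{1.12}, the prefactor $h(x) := \bar V_0''(x)/[2\sinh(x/2T)]$ extends to a continuous even function of $x$ that behaves like $|x|$ near the origin and like $|x|^2 e^{-|x|/2T}$ at infinity; in particular $h \in L^2({\mathbbm{R}})$. By Lemma~\ref{lemma:2.2}, $\gamma_0$ attains a strictly positive minimum at the origin and grows like $x^3$ at infinity, so $1/\Gamma_0$ is a bounded function on ${\mathbbm{R}}$ whose tails decay like $1/|\epsilon|^3$; in particular both $1/\Gamma_0$ and $1/\Gamma_0^2$ are integrable.

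After the change of variables $u = \epsilon + x$ the Hilbert--Schmidt norm factorizes as
\be
\|\hat K_0\|_{\text{HS}}^2 = \intR dx\,h(x)^2\intR d\epsilon\,\frac{1}{\Gamma_0(\epsilon)\,\Gamma_0(\epsilon+x)}\ .
\ee
The inner $\epsilon$-integral is bounded uniformly in $x$ by $\intR d\epsilon/\Gamma_0(\epsilon)^2$, via the Cauchy--Schwarz inequality together with the translation invariance of Lebesgue measure; this bound is finite by the second estimate above. The outer $x$-integral then converges by the first estimate. Hence $\|\hat K_0\|_{\text{HS}}<\infty$, and complete continuity follows. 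The only nontrivial point I foresee is verifying the algebraic cancellation that produces the closed-form expression for $\hat K_0$; once that reduction is carried out, the analytic estimates are immediate consequences of Lemmas~\ref{lemma:2.2} and \ref{lemma:3.2}.
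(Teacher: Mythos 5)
Your proposal is correct and follows essentially the same route as the paper: both reduce the symmetrized kernel via Lemma~\ref{lemma:3.2} to the explicit form $\bar V_0''(u-\epsilon)/[2\sinh((u-\epsilon)/2T)\sqrt{\Gamma_0(\epsilon)\Gamma_0(u)}]$ and then verify the Hilbert--Schmidt property after the change of variables $u=\epsilon+x$, invoking Lemma~\ref{lemma:2.2} for the positivity and $x^3$ growth of $\gamma_0$. The only (immaterial) difference is in the final estimate, where you bound the inner integral by Cauchy--Schwarz and translation invariance, i.e.\ by $\intR d\epsilon/\Gamma_0(\epsilon)^2$, whereas the paper bounds one factor pointwise by $1/\gamma_0(0)$ and integrates $1/\gamma_0$ over the remaining variable.
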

\begin{proof}
We show that $\hat K_0$ is Hilbert-Schmidt, or square integrable. Using \eqref{3.4}, \eqref{1.9}, and \eqref{3.2}, and omitting positive constants, we have
\begin{eqnarray*}
\intR d\epsilon\,du\,\left({\hat K}_0(\epsilon,u)\right)^2 &=& \intR d\epsilon\intR du\,\frac{\left({\bar V}_0''(\epsilon-u)\right)^2}{4\sinh^2((\epsilon -u)/2T)}\,\frac{1}{\Gamma_0(\epsilon)\Gamma_0(u)}
\nonumber\\
&\propto& \intR dx\intR dy\,\frac{(x-y)^4}{\sinh^2((x-y)/2)}\,\frac{1}{\gamma_0(x)\gamma_0(y)}
\nonumber\\
&\propto& \intR dx\intR dy\,\frac{x^4}{\sinh^2 x}\,\frac{1}{\gamma_0(2x+2y)\gamma_0(2y)}
\nonumber\\
&\leq& \frac{1}{\gamma_0(0)} \intR dx\,\frac{x^4}{\sinh^2 x} \intR dy\,\frac{1}{\gamma_0(2y)} 
\nonumber\\
&<& \infty
\end{eqnarray*}
where we have used Lemma~\ref{lemma:2.2} to obtain the upper bound. 
This proves compactness, since every Hilbert-Schmidt operator is compact by \cite{Akhiezer_Glazman_1993} p.58.
\qed
\end{proof}

Now consider the space $L^2_{\sigma}$ of real valued functions $\varphi$, $\psi$ that are square integrable with respect to 
the measure $d\sigma(\epsilon)$, where
\bse
\label{3.5}
\be
\sigma(\epsilon) = \int_{-\infty}^{\epsilon} du\,w(u)\Gamma_0(u)
\label{3.5a}
\ee
is the non-decreasing distribution function that characterizes the measure. The scalar product is
\be
\langle\varphi\vert\psi\rangle = \intR d\sigma(\epsilon)\,\varphi(\epsilon)\psi(\epsilon)\ ,
\label{3.5b}
\ee
and the vector norm is
\be
\vert\vert\varphi\vert\vert = \langle\varphi\vert\varphi\rangle^{1/2}\ .
\label{3.5c}
\ee
\ese
\begin{remark}
$L^2$ and $L^2_{\sigma}$ are both complete with a norm derived from
the scalar product, and hence are Hilbert spaces  (\cite{Akhiezer_Glazman_1993} Ch.1). The relation
\bse
\label{3.6}
\be
\hat\varphi(\epsilon)\, \hat= \, \varphi(\epsilon) \sqrt{w(\epsilon)\Gamma_0(\epsilon)}
\label{3.6a}
\ee
provides an isomorphism between $L^2$ and  $L^2_{\sigma}$. The relation
\be
\hat K(\epsilon,u)\, \hat=\, \sqrt{w(\epsilon)\Gamma_0(\epsilon)}\,K(\epsilon,u)/\sqrt{w(u)\Gamma_0(u)}
\label{3.6b}
\ee
\ese
provides an isomorphism between operators in $L^2$ and operators in $L^2_{\sigma}$.
\label{remark:3.1}
\end{remark}
\begin{proposition}
$K_0$ is self-adjoint with respect to the scalar product \eqref{3.5} in $L^2_{\sigma}$ 
and has a spectral representation
\be
K_0 = \sum_n \lambda_n\,\vert e_n \rangle \langle e_n\vert\ .
\label{3.7}
\ee
The spectrum contains an eigenvalue $\lambda_0 = 1$ of multiplicity one and an associated normalized eigenvector $\vert e_0\rangle$ given by
the constant function $e_0(\epsilon)\equiv 1/\langle 1\vert 1 \rangle$. The remainder of the spectrum
is purely discrete with eigenvalues  $\lambda_1, \lambda_2, \ldots$ such that $\vert\lambda_n\vert < 1$ $\forall n\geq 1$ 
and orthonormal eigenvectors $\vert e_n\rangle$.  $\lambda_{\infty}=0$ is the only accumulation point of eigenvalues. 
\label{proposition:3.1}
\end{proposition}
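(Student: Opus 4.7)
The plan is to lift the compactness of $\hat K_0$ established in Proposition~\ref{proposition:3.0} across the unitary isomorphism of Remark~\ref{remark:3.1}, combine it with symmetry to obtain a purely discrete spectrum, and then use the strict off-diagonal positivity of $\bar K_0$ to pin down the extremal eigenvalue.

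First, because the map $\varphi\mapsto\hat\varphi=\varphi\sqrt{w\,\Gamma_0}$ of \eqref{3.6a} is a Hilbert-space isometry $L^2_\sigma\to L^2$ intertwining $K_0$ with $\hat K_0$, compactness on $L^2_\sigma$ descends immediately from Proposition~\ref{proposition:3.0}. For self-adjointness, the symmetry $w(\epsilon)\bar K_0(\epsilon,u)=w(u)\bar K_0(u,\epsilon)$ of Lemma~\ref{lemma:2.1} rewrites as
\[
 w(\epsilon)\Gamma_0(\epsilon)\,K_0(\epsilon,u)=w(u)\Gamma_0(u)\,K_0(u,\epsilon),
\]
which is exactly the condition for $K_0$ to be self-adjoint with respect to $d\sigma=w\,\Gamma_0\,d\epsilon$. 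Applying the Hilbert--Schmidt spectral theorem for compact self-adjoint operators \cite{Akhiezer_Glazman_1993} then delivers \eqref{3.7}, real eigenvalues of finite multiplicity, a complete orthonormal eigenbasis, and $0$ as the sole possible accumulation point.

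The identification $\lambda_0=1$ with the constant eigenvector is immediate from the definition \eqref{1.11} of $\Gamma_0$, since $\intR du\,K_0(\epsilon,u)=1$ for every $\epsilon$. Combined with the detailed-balance relation above, this exhibits $K_0$ as a reversible Markov operator on $(\mathbbm{R},d\sigma)$, so a Jensen estimate applied to the probability kernel $K_0(\epsilon,\cdot)\,du$ yields $\|K_0\|\le 1$ and confines the spectrum to $[-1,1]$.

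The genuinely delicate step, and the one I expect to require the most care, is strict simplicity of $\lambda_0=1$ together with the exclusion of $-1$ from the spectrum. My plan is to pass to $K_0^2$, whose kernel $\intR d\mu\,K_0(\epsilon,\mu)K_0(\mu,u)$ is strictly positive for \emph{every} pair $(\epsilon,u)$: by Lemma~\ref{lemma:3.2} the integrand vanishes only on the null set $\{\mu=\epsilon\}\cup\{\mu=u\}$. Jentzsch's theorem for compact integral operators with almost-everywhere positive kernels then implies that the spectral radius of $K_0^2$ is a simple eigenvalue with a strictly positive eigenfunction; since $K_0^2|1\rangle=|1\rangle$ and $\|K_0^2\|\le 1$, that eigenvalue must be $1$ itself and the constant is its unique eigenvector up to scale. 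If $K_0\psi=\lambda\psi$ with $|\lambda|=1$, then $K_0^2\psi=\psi$, so $\psi$ is constant and $\lambda=1$. This simultaneously rules out $-1\in\mathrm{spec}(K_0)$ and forces the $\lambda=1$ eigenspace of $K_0$ to be one-dimensional; combined with compactness it gives $|\lambda_n|<1$ strictly for every $n\ge 1$.
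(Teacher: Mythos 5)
Your proposal is correct, and its first half --- compactness transported through the isomorphism of Remark~\ref{remark:3.1}, self-adjointness read off from the detailed-balance form of Lemma~\ref{lemma:2.1}, the spectral theorem for compact self-adjoint operators, and the identification of the constant function as an eigenvector with eigenvalue $1$ via $\intR du\,K_0(\epsilon,u)=1$ --- coincides with the paper's proof. Where you genuinely diverge is in the delicate step: simplicity of $\lambda_0=1$ and the strict bound $\vert\lambda_n\vert<1$ for $n\geq 1$. The paper argues by an explicit Dirichlet-form identity: rearranging the eigenvalue equation, multiplying by $w\Gamma_0$, integrating, and symmetrizing with Fubini gives
\[
(1\mp\lambda)\intR d\epsilon\,w(\epsilon)\Gamma_0(\epsilon)\left(\varphi(\epsilon)\right)^2
=\frac{1}{2}\intR d\epsilon\,du\,w(\epsilon)\,{\bar K}_0(\epsilon,u+\epsilon)\left[\varphi(u+\epsilon)\mp\varphi(\epsilon)\right]^2
\]
(equations \eqref{3.10a} and \eqref{3.10b}), so that $\lambda=1$ forces $\varphi$ to be constant and $\lambda=-1$ forces $\varphi=0$, using only the strict off-diagonal positivity of ${\bar K}_0$ from Lemma~\ref{lemma:3.2}. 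You instead first bound $\Vert K_0\Vert\leq 1$ by the reversible-Markov-operator (Jensen) argument and then apply Jentzsch's theorem to the compact operator $K_0^2$, whose kernel is positive almost everywhere, obtaining simplicity of the spectral radius and the exclusion of $-1$ in one stroke. Both arguments rest on the same three structural facts --- stochasticity of the kernel, reversibility with respect to $d\sigma=w\,\Gamma_0\,d\epsilon$, and strict positivity of ${\bar K}_0$ off the diagonal --- but the paper's route is elementary and self-contained, while yours imports a standard but heavier piece of positivity theory (Perron--Frobenius for integral operators) and in exchange is shorter and makes the probabilistic structure of $K_0$ explicit. Your verification that the kernel of $K_0^2$ vanishes only when the inner integration variable lies in the null set $\{\epsilon,u\}$ is exactly the point needed for Jentzsch's theorem to apply, so the argument is complete.
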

\begin{proof}
The self-adjointness of $K_0$ follows from Lemma \ref{lemma:2.1}, and $\lambda_0 = 1$ is an eigenvalue with eigenfunction 
$\vert e_0\rangle \propto \vert 1\rangle$, where $\vert 1\rangle$ is the vector that represents the constant function $e(\epsilon) \equiv 1$,
by \eqref{1.9}, \eqref{1.11}.  Since $\hat K_0$ is compact in $L^2$ by Proposition \ref{proposition:3.0}, $K_0$ is compact in $L^2_{\sigma}$ by the isomorphism \eqref{3.6b}. 
The existence of a spectral representation and the discreteness of the spectrum then follow from \cite{Schatten_1960} Theorem 6.
It remains to be shown that the eigenvalue $\lambda_0=1$ has multiplicity one, and that $\vert\lambda_{n\geq1}\vert<1$. To this end we write the eigenproblem
\bea
\lambda \varphi(\epsilon) &=& \intR du\,K_0(\epsilon,u)\,\varphi(u)
\nonumber\\
                              &=& \intR du\,K_0(\epsilon,u)\,\varphi(u) -  \varphi(\epsilon) +\varphi(\epsilon)
\nonumber\\                              
                              &=&\varphi(\epsilon) + ({\cal C}_0\varphi)(\epsilon)
\label{3.8}
\eea 
with
\be
({\cal C}_0\varphi)(\epsilon) = \intR du\,K_0(\epsilon,u+\epsilon) \left[\varphi(u+\epsilon) - \varphi(\epsilon)\right]\ .
\label{3.9}
\ee
Rearranging the identity \eqref{3.8}, multiplying by $w(\epsilon)\Gamma_0(\epsilon)$, and integrating over $\epsilon$ yields
\bse
\label{3.10}
\bea
(1-\lambda) \intR d\epsilon\,w(\epsilon) \Gamma_0(\epsilon) \left(\varphi(\epsilon)\right)^2  &=& 
\nonumber\\
&& \hskip -110pt = - \intR d\epsilon\,du\,w(\epsilon)\,
     {\bar K}_0(\epsilon,u+\epsilon)\,\varphi(\epsilon) \left[\varphi(u+\epsilon) - \varphi(\epsilon)\right]
\nonumber\\
&& \hskip -110pt = - \intR d\epsilon\,du\,w(u+\epsilon)\, {\bar K}_0(u+\epsilon,\epsilon)\,\varphi(\epsilon) \left[\varphi(u+\epsilon) - \varphi(\epsilon)\right]
\nonumber\\
 && \hskip -110pt = \intR d\epsilon\,du\,w(\epsilon)\,{\bar K}_0(\epsilon,u+\epsilon)\,\varphi(u+\epsilon) \left[\varphi(u+\epsilon) - \varphi(\epsilon)\right]
 \nonumber\\
 && \hskip -110pt = \frac{1}{2} \intR d\epsilon\,du\,w(\epsilon)\,{\bar K}_0(\epsilon,u+\epsilon) \left[\varphi(u+\epsilon) - \varphi(\epsilon)\right]^2\ .
\label{3.10a}
\eea          
Here we have used \eqref{2.1a} in going from the second line to the third line. In going from the third line to the fourth line we have
used Fubini's theorem to interchange the order of integration. Finally we have added the third and fourth lines to arrive at the last line. 
The right-hand side of \eqref{3.10a} is positive semi-definite by Lemma~\ref{lemma:3.2}. We conclude that $\lambda = 1$ if and only if
$\varphi(u+\epsilon) = \varphi(\epsilon)$ $\forall \epsilon,u$, which implies that
$\varphi$ is the constant function. There is thus only one linearly independent eigenfunction for the eigenvalue $\lambda=1$. Furthermore, all other 
eigenvalues obey $\lambda<1$.  Analogous reasoning yields
\bea
(1+\lambda) \intR d\epsilon\,w(\epsilon) \Gamma_0(\epsilon) \left(\varphi(\epsilon)\right)^2  &=& 
 \nonumber\\
 && \hskip -110pt = \frac{1}{2} \intR d\epsilon\,du\,w(\epsilon)\,{\bar K}_0(\epsilon,u+\epsilon) \left[\varphi(u+\epsilon) + \varphi(\epsilon)\right]^2\ .
\label{3.10b}
\eea      
\ese
$\lambda= -1$ thus implies $\varphi(u+\epsilon)+\varphi(\epsilon)=0$ $\forall \epsilon,u$, which implies that $\varphi$ is
the null function. $\lambda=-1$ therefore is not an eigenvalue. All eigenvalues must obey $\lambda>-1$ in addition to the
condition $\lambda<1$, and hence $-1<\lambda_n<1$ $\forall n>0$.
\qed
\end{proof}
\begin{remark}
Since the spectrum of $K_0$ is bounded by the operator norm $\vert\vert K_0\vert\vert$, this result also implies $\vert\vert K_0\vert\vert = 1$. 
\end{remark}
\begin{remark}
The upper bound in the proof that $\hat K_0$ is Hilbert-Schmidt remains valid if we replace $\bar V_0''$ by $\bar V_2''$, and from 
Lemma \ref{lemma:2.1} it follows that the kernel $K_2$ is self-adjoint. Consequently, the symmetric part of the full kernel, 
$K_+ = K_0 - K_2$ has a purely discrete spectrum and a spectral representation. 
\end{remark}
\begin{remark}
${\cal C}_0$ has the structure of a collision operator in kinetic theory \cite{Cercignani_1988}.
\end{remark}

\begin{corollary}
The collision operator ${\cal C}_0$ has a spectral representation in terms of the eigenvectors $\vert e_n\rangle$ of $K_0$,
\be
{\cal C}_0 = \sum_n \tilde\lambda_n \vert e_n\rangle\langle e_n\vert
\label{3.11}
\ee
with a purely discrete spectrum with one zero eigenvalue $\tilde\lambda_0 = 0$ of multiplicity one, and all other eigenvalues are negative
in the interval $\tilde\lambda_{n\geq 1} \in \ ]\!\! -2,0[$.
\label{corollary:3.1}
\end{corollary}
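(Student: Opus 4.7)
The plan is to recognize that $\mathcal{C}_0$ is just a rank-zero shift of $K_0$ by the identity, and then to pull the spectral representation (3.11) directly across the equality $\mathcal{C}_0 = K_0 - \mathbbm{1}$. To see this identity, I would start from the defining formula \eqref{3.9} and substitute $v = u+\epsilon$:
\[
(\mathcal{C}_0\varphi)(\epsilon) = \intR dv\,K_0(\epsilon,v)\,\varphi(v) - \varphi(\epsilon)\intR dv\,K_0(\epsilon,v).
\]
The second integral is $\int du\,{\bar K}_0(\epsilon,u)/\Gamma_0(\epsilon) = \Gamma_0(\epsilon)/\Gamma_0(\epsilon) = 1$ by \eqref{1.9a} and \eqref{1.11}, so as bounded operators on $L^2_\sigma$ we have $\mathcal{C}_0 = K_0 - \mathbbm{1}$. (This is also what the chain of manipulations leading to \eqref{3.8} effectively says, read backwards.)

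Given this identity, the spectral decomposition of $\mathcal{C}_0$ is a direct transfer from Proposition \ref{proposition:3.1}. Let $\{\vert e_n\rangle\}$ denote the orthonormal eigenbasis of $K_0$ in $L^2_\sigma$ with eigenvalues $\lambda_n$. Because $K_0$ is compact and self-adjoint on $L^2_\sigma$, the eigenvectors — including any eigenvectors of the possibly nontrivial kernel of $K_0$ — form a complete orthonormal system, so the resolution of the identity $\mathbbm{1} = \sum_n \vert e_n\rangle\langle e_n\vert$ holds. Subtracting this from \eqref{3.7} yields
\[
\mathcal{C}_0 = \sum_n (\lambda_n - 1)\,\vert e_n\rangle\langle e_n\vert,
\]
which is the claimed form with $\tilde\lambda_n = \lambda_n - 1$.

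The property statements in the corollary are then read off from Proposition~\ref{proposition:3.1}: the simple eigenvalue $\lambda_0 = 1$ with eigenvector $\vert 1\rangle/\langle 1\vert 1\rangle^{1/2}$ maps to $\tilde\lambda_0 = 0$ of multiplicity one, while the strict bounds $-1 < \lambda_n < 1$ for $n\geq 1$ give $\tilde\lambda_n \in \ ]\!\!-2,0[$. Purity and discreteness of the spectrum, and $\tilde\lambda_\infty = -1$ as the only accumulation point, follow because subtracting $\mathbbm{1}$ does not alter compactness of the underlying operator's resolvent structure.

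The main (mild) obstacle is the completeness step: one needs that the eigenvectors of $K_0$ resolve the identity on all of $L^2_\sigma$, not merely on $\overline{\operatorname{Ran}(K_0)}$. For a compact self-adjoint operator this is standard — $L^2_\sigma = \ker(K_0) \oplus \overline{\operatorname{Ran}(K_0)}$ and one may choose an orthonormal basis of $\ker(K_0)$ consisting of eigenvectors for the eigenvalue $0$ — but it is worth recording explicitly, since Proposition~\ref{proposition:3.1} only asserts discreteness of the spectrum and does not itself spell out the completeness relation used above.
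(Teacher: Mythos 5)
Your proposal is correct and follows essentially the same route as the paper's own proof: identify ${\cal C}_0 = K_0 - \mathbbm{1}$ from \eqref{3.9} and \eqref{1.11}, invoke the resolution of the identity $\mathbbm{1} = \sum_n \vert e_n\rangle\langle e_n\vert$ in $L^2_{\sigma}$, and shift the eigenvalues of Proposition~\ref{proposition:3.1} by $-1$. Your explicit remarks on the completeness of the eigenbasis (including a basis of $\ker K_0$) and the verification that $\intR dv\,K_0(\epsilon,v)=1$ merely spell out steps the paper leaves implicit.
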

\begin{proof}
From the definition of ${\cal C}_0$, \eqref{3.9}, we have $K_0 - \mathbbm{1}$, with $\mathbbm{1}$ the identity operator in $L_{\sigma}$.
Since $L_{\sigma}$ is complete we have $\mathbbm{1} = \sum_n \vert e_n\rangle\langle e_n\vert$, and the corollary follows from
Proposition \ref{proposition:3.1}, with $\tilde\lambda_n = \lambda_n - 1$.
\qed
\end{proof}

\subsection{Temperature scaling of the integral equation}
\label{subsec:3.2}

We use \eqref{3.8} to write the integral equation \eqref{1.7} in terms of the collision operator ${\cal C}_0$. 
If we write $\varphi$ in terms of its even and odd parts $\varphi_{\pm}(\epsilon) = [\varphi(\epsilon) \pm \varphi(-\epsilon)]/2$,
\eqref{1.7} takes the form of two coupled integral equations
\bse
\label{3.12}
\bea
{\cal C}_0\vert\varphi_+\rangle &=& -\vert 1/\Gamma_0\rangle - K_1 \vert\varphi_-\rangle + K_2\vert\varphi_+\rangle
\label{3.12a}\\
{\cal C}_0\vert\varphi_-\rangle &=& \hskip 37pt                         - \ K_1\vert\varphi_+\rangle + K_2\vert\varphi_-\rangle\ .
\label{3.12b}
\eea
\ese
In what follows we say that a function $f$ scales as $T^n$, $f\sim T^n$, if $f \propto T^n \phi(\epsilon/T)$. Let $\varphi_{\pm} \sim T^{-n_{\pm}}$. 
Then from \eqref{1.13}, \eqref{1.14} it follows that ${\cal C}_0\vert\varphi_{\pm}\rangle \sim T^{-n_{\pm}}$, $\vert 1/\Gamma_0\rangle \sim T^{-3}$,
$K_1\vert\varphi_{\pm}\rangle \sim T^{-n_{\pm}+1}$, and $K_2\vert\varphi_{\pm}\rangle \sim T^{-n_{\pm}+2}$.
\begin{lemma}
For $T\to 0$, the leading contribution to $\varphi_+$ is a constant function, $\vert\varphi_+\rangle = \phi_+ \vert 1\rangle$, 
and $n_+ = n_- + 1 =5$.
\label{lemma:3.3}
\end{lemma}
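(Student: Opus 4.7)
The plan is to apply the Fredholm alternative for $\mathcal{C}_0$, using the spectral gap established in Corollary~\ref{corollary:3.1}, together with the parity properties of $K_1$ and $K_2$ from Remark~\ref{remark:1.7}, to solve the coupled system \eqref{3.12a}--\eqref{3.12b} order by order in $T$. The structure mirrors the formal treatment in Section~\ref{sec:2}, but now the inverse $\mathcal{C}_0^{-1}$ restricted to the orthogonal complement of $\vert 1\rangle$ is a genuine bounded operator because the nonzero eigenvalues $\tilde\lambda_n$ of $\mathcal{C}_0$ are bounded away from $0$ (their only accumulation point is $-1$).

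First I decompose $\vert\varphi_+\rangle = \phi_+\vert 1\rangle + \vert\delta\varphi_+\rangle$ with $\phi_+\in\mathbbm{R}$ a scalar and $\langle 1\vert\delta\varphi_+\rangle = 0$; since $\varphi_-$ is odd it already lies in the orthogonal complement of $\vert 1\rangle$, so no such decomposition is needed there. Equation \eqref{3.12b} is then automatically solvable: by the parity of $K_1$, $K_2$ and of $\varphi_{\pm}$, both $\langle 1\vert K_1\vert\varphi_+\rangle$ and $\langle 1\vert K_2\vert\varphi_-\rangle$ vanish identically. Inverting $\mathcal{C}_0$ on the orthogonal complement and keeping the leading term yields
\[
\vert\varphi_-\rangle = -\phi_+\,\mathcal{C}_0^{-1}K_1\vert 1\rangle + \text{(subleading)},
\]
the $K_2\vert\varphi_-\rangle$ correction being smaller by two powers of $T$ and resummable as a convergent Neumann series $(\mathbbm{1}-\mathcal{C}_0^{-1}K_2)^{-1}$.

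Second, I impose the solvability condition $\langle 1\vert\text{RHS}\rangle=0$ on \eqref{3.12a}. This eliminates $\vert\delta\varphi_+\rangle$ from the left-hand side and, after substituting the leading expression for $\vert\varphi_-\rangle$ and using Lemma~\ref{lemma:2.1} to obtain that $K_1$ is anti-self-adjoint in $L^2_\sigma$, produces the scalar identity
\[
\phi_+\Bigl[\,\langle 1\vert K_2\vert 1\rangle - \langle K_1\,1\vert\mathcal{C}_0^{-1}\vert K_1\,1\rangle\,\Bigr] = \langle 1\vert 1/\Gamma_0\rangle + \text{(subleading)}.
\]
Both bracketed terms are strictly positive: the first because $K_2\geq 0$ by Lemma~\ref{lemma:3.2}, the second because $\mathcal{C}_0^{-1}$ is negative on the orthogonal complement of $\vert 1\rangle$ by Corollary~\ref{corollary:3.1}. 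Power counting with $d\sigma\sim T^4\,d(\epsilon/T)$ and $\Gamma_\nu\sim T^{\nu+3}$ gives $\langle 1\vert 1/\Gamma_0\rangle\sim T$ while each bracketed term scales as $T^6$, yielding $\phi_+\sim T^{-5}$, i.e.\ $n_+=5$. Feeding this into $\vert\varphi_-\rangle = -\phi_+\,\mathcal{C}_0^{-1}\vert\Gamma_1/\Gamma_0\rangle+\cdots$ and using $\Gamma_1/\Gamma_0\sim T$ gives $\vert\varphi_-\rangle\sim T^{-4}$, so $n_-=4$ and $n_+=n_-+1$. Finally, inverting $\mathcal{C}_0$ on the orthogonal complement in \eqref{3.12a} shows $\vert\delta\varphi_+\rangle\sim T^{-3}$, which is subleading compared to $\phi_+\vert 1\rangle\sim T^{-5}$, confirming that the leading part of $\varphi_+$ is indeed the constant $\phi_+$.

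The principal technical obstacle is not the formal derivation above but the rigorous bookkeeping of subleading terms. One must verify that, on the natural scaling classes $\{f:f\sim T^n\}$, the operators $K_1$, $K_2$, and $\mathcal{C}_0^{-1}$ (restricted to the orthogonal complement of $\vert 1\rangle$) act as bounded operators shifting the exponent by $+1$, $+2$, and $0$ respectively in a $T$-uniform operator norm, and that the Neumann iteration providing $\vert\delta\varphi_+\rangle$ and the $K_2$-correction to $\vert\varphi_-\rangle$ converges in $L^2_\sigma$ for sufficiently small $T$. The decisive input is the $T$-independent spectral gap $\inf_{n\geq 1}\vert\tilde\lambda_n\vert>0$ from Proposition~\ref{proposition:3.1}, which bounds the norm of $\mathcal{C}_0^{-1}$ on the orthogonal complement and makes the iteration contractive in the small parameter $T$.
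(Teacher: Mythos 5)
Your argument is correct, but it takes a genuinely different route from the paper's. The paper proves Lemma~\ref{lemma:3.3} by a short non-constructive scaling argument: since $\varphi_-$ is odd it cannot be a nonzero constant, so ${\cal C}_0\vert\varphi_-\rangle$ is genuinely of order $T^{-n_-}$ and balancing it against $K_1\vert\varphi_+\rangle$ forces $n_-=n_+-1$; then, if the leading part of $\varphi_+$ were \emph{not} constant, the two $\varphi$-dependent terms on the right of \eqref{3.12a} would be negligible and the surviving equation ${\cal C}_0\vert\varphi_+\rangle=-\vert 1/\Gamma_0\rangle$ would violate the solvability condition $\langle e_0\vert 1/\Gamma_0\rangle\neq 0$; hence $\varphi_+$ is constant at leading order and $n_+=5$ follows from matching $K_2\vert\varphi_+\rangle\sim T^{-n_++2}$ to $\vert 1/\Gamma_0\rangle\sim T^{-3}$. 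You instead construct the solution: decompose $\varphi_+$ into its $\vert 1\rangle$-component plus an orthogonal remainder, invert ${\cal C}_0$ on the complement using the spectral gap of Corollary~\ref{corollary:3.1}, and read off $\phi_+\sim T^{-5}$ from the explicit solvability identity. Your scalar identity is exactly the paper's \eqref{3.20}, and $-{\cal C}_0^{-1}=\sum_m(K_0)^m$ on the odd subspace is the paper's \eqref{3.13}; so in effect you prove Lemma~\ref{lemma:3.3}, Proposition~\ref{proposition:3.3}, and Theorem~\ref{theorem:3.1} in one pass, which buys an explicit, manifestly positive expression for $\phi_+$ and a bound on the remainder $\delta\varphi_+\sim T^{-3}$ that the paper's lemma leaves implicit. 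The cost is that you must justify the uniform operator bounds and the convergence of the iteration, which you correctly identify as the remaining technical work; the paper discharges precisely that obligation in the proof of Proposition~\ref{proposition:3.3} via the estimate \eqref{3.17} (the strict gap $\lambda_*^2<1$ follows, as you note, from $\vert\lambda_{n\geq 1}\vert<1$ together with $0$ being the only accumulation point). Your parity checks (automatic solvability of \eqref{3.12b}, vanishing of $\langle 1\vert K_1\vert\varphi_+\rangle$ and $\langle 1\vert K_2\vert\varphi_-\rangle$) and your power counting $d\sigma\sim T^4$, $\langle 1\vert 1/\Gamma_0\rangle\sim T$, bracket $\sim T^6$ are all consistent with \eqref{1.13}--\eqref{1.14} and with the paper's final result \eqref{3.19}.
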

\begin{proof}
$\varphi_-$ cannot be a constant function other than the null function, in which case \eqref{3.12b} would have no solution
for $\varphi_+$ other than the null function.
Therefore, the second term on the righ-hand side of \eqref{3.12b} is negligible for $T\to 0$ and we have $-n_- = -n_+ + 1$.
The last two terms on the right-hand side of \eqref{3.12a} therefore scale the same way. If $\varphi_+$ were not constant,
then these two terms would be negligible compared to the other two terms in \eqref{3.12a}. But the resulting equation,
${\cal C}_0\vert\varphi_-\rangle = - \vert 1/\Gamma_0\rangle$, has no solution since the inhomogeneity is not orthogonal
to the zero eigenvector of ${\cal C}_0$, $\langle e_0\vert 1/\Gamma_0\rangle \neq 0$. Therefore, the leading contribution
to $\varphi_+$ as $T\to 0$ must be a constant function, and $n_+ = 5$ follows from the requirement that the two
$\varphi$ dependent terms on the right-hand side of \eqref{3.12a} have the same scaling behavior as the inhomogeneity
$\vert 1/\Gamma_0\rangle \sim T^{-3}$.
\qed
\end{proof}

\begin{proposition}
To leading order for $T\to 0$, 
\be
\vert\varphi_-\rangle = -\phi_+ \sum_{m=0}^{\infty} \left( K_0\right)^m \vert\Gamma_1/\Gamma_0\rangle\ .
\label{3.13}
\ee
\label{proposition:3.3}
\end{proposition}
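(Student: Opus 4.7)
The plan is to use Lemma~\ref{lemma:3.3} to reduce \eqref{3.12b} to a single linear equation for $\vert\varphi_-\rangle$, and then invert the collision operator ${\cal C}_0$ via a Neumann expansion that converges on the orthogonal complement of its zero mode.

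First I would substitute the leading behavior $\vert\varphi_+\rangle=\phi_+\vert 1\rangle$ supplied by Lemma~\ref{lemma:3.3} into \eqref{3.12b}. The three terms on the right-hand side scale as ${\cal C}_0\vert\varphi_-\rangle\sim T^{-4}$, $K_1\vert\varphi_+\rangle\sim T^{-4}$, and $K_2\vert\varphi_-\rangle\sim T^{-2}$, so the $K_2$ contribution drops out at leading order and we are left with
\[
 {\cal C}_0\vert\varphi_-\rangle = -\phi_+\,K_1\vert 1\rangle = -\phi_+\,\vert\Gamma_1/\Gamma_0\rangle,
\]
the second equality being the direct evaluation of $K_1$ on the constant function via \eqref{1.9a} and \eqref{1.14b}.

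Next I would invoke Corollary~\ref{corollary:3.1} to write ${\cal C}_0=K_0-\mathbbm{1}$, so that the equation becomes $(\mathbbm{1}-K_0)\vert\varphi_-\rangle=\phi_+\,\vert\Gamma_1/\Gamma_0\rangle$. Since by Proposition~\ref{proposition:3.1} the kernel of $\mathbbm{1}-K_0$ is the one-dimensional span of $\vert e_0\rangle$, I would verify the solvability condition. It holds automatically by parity: Remark~\ref{remark:1.7} together with \eqref{1.14b} makes $\Gamma_1$ odd in $\epsilon$ while $w\Gamma_0$ is even by Lemma~\ref{lemma:2.2}, so $\langle e_0\vert\Gamma_1/\Gamma_0\rangle\propto\intR d\epsilon\,w(\epsilon)\Gamma_1(\epsilon)=0$; and $\vert\varphi_-\rangle$ itself is odd by construction, so it lies in $\{\vert e_0\rangle\}^\perp$, which pins the particular odd solution uniquely. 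On this complement Proposition~\ref{proposition:3.1} gives the self-adjoint operator $K_0$ an operator norm $r:=\max_{n\geq 1}\vert\lambda_n\vert$, where $r<1$: the maximum is attained because the spectrum is discrete with $0$ as its only accumulation point, and the strict inequality is the content of the Dirichlet-form identity \eqref{3.10a}. Consequently $\sum_{m=0}^{\infty}(K_0)^m$ converges in operator norm on $\{\vert e_0\rangle\}^\perp$ to $(\mathbbm{1}-K_0)^{-1}$, and applying it to the inhomogeneity yields the representation \eqref{3.13}.

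I expect the main obstacle to be the quantitative bookkeeping of ``leading order.'' The omitted piece $K_2\vert\varphi_-\rangle$ is smaller than the retained terms only by the parametric factor $T^2$, and to conclude that it produces only subleading corrections to $\vert\varphi_-\rangle$ one must control the norm of the resolvent into which it is fed. Proposition~\ref{proposition:3.1} furnishes exactly the needed bound $\|(\mathbbm{1}-K_0)^{-1}\|\le 1/(1-r)$ on $\{\vert e_0\rangle\}^\perp$ via the strict spectral gap $r<1$; without that gap the Neumann expansion would be purely formal and the leading-order truncation could not be justified. The same spectral gap will govern the analogous bookkeeping in the complementary equation \eqref{3.12a} when $\phi_+$ itself is determined.
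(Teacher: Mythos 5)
Your proposal is correct and follows essentially the same route as the paper: drop $K_2\vert\varphi_-\rangle$ by its relative $T^2$ scaling, reduce \eqref{3.12b} to $(\mathbbm{1}-K_0)\vert\varphi_-\rangle=\phi_+\vert\Gamma_1/\Gamma_0\rangle$, invert via a Neumann series that converges on the orthogonal complement of the even zero mode thanks to the spectral gap $\vert\lambda_{n\geq1}\vert<1$ from Proposition~\ref{proposition:3.1}, and pin down uniqueness from the oddness of $\varphi_-$. Your explicit observation that the supremum of $\vert\lambda_n\vert$ over $n\geq 1$ is attained and strictly less than $1$ because $0$ is the only accumulation point of the spectrum is exactly the detail needed to justify the paper's bound $\lambda_*^2<1$ in \eqref{3.17}.
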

\begin{proof}
$K_2\vert\varphi_-\rangle$ in \eqref{3.12b} is of relative order $T^2$ compared to ${\cal C}_0\vert\varphi_-\rangle$, and with 
Lemma~\ref{lemma:3.3} and Corollary~\ref{corollary:3.1} we have
\be
{\cal C}_0\vert\varphi_-\rangle = \sum_n (1-\lambda_n)\vert e_n\rangle\langle e_n\vert \varphi_-\rangle = -\phi_+ K_1\vert 1\rangle = -\phi_+\vert\Gamma_1/\Gamma_0\rangle\ .
\label{3.14}
\ee
We solve for $\vert\varphi_-\rangle$ by multiplying from the left with $\sum_{m\neq 0} \frac{1}{1-\lambda_m}\,\vert e_m\rangle\langle e_m \vert$
and obtain
\be
\vert\varphi_-\rangle = -\phi_+ \sum_{n\neq 0} \frac{1}{1-\lambda_n}\,\vert e_n\rangle\langle e_n\vert\Gamma_1/\Gamma_0\rangle\ .
\label{3.15}
\ee
Now let $f_-$ be an odd function in $L^2_{\sigma}$. From \eqref{3.7} and the fact that the eigenfunction $e_0$ is even we have
\be
\sum_{m=0}^{\infty} \left(K_0\right)^m \vert f_-\rangle = \sum_{m=0}^{\infty} \sum_{n\neq 0} \left(\lambda_n\right)^m \langle e_n\vert f_-\rangle \vert e_n\rangle\ .
\label{3.16}
\ee
To show that this series converges we use the fact that $\vert\lambda_{n\neq 0}\vert<1$ by Proposition~\ref{proposition:3.1}. 
Let $\lambda_*^2 = \sup\{\lambda_n^2 \, ; n\neq 0\}$. Then 
\bea
\left\vert\left\vert \sum_{m=0}^{\infty} \left(K_0\right)^m \vert f_-\rangle \right\vert\right\vert^2 &=& 
                                                      \sum_{m=0}^{\infty} \sum_{n\neq 0} \left(\lambda_n\right)^{2m} \langle e_n\vert f_-\rangle^2
< \sum_{m=0}^{\infty} \left(\lambda_*^2\right)^{m} \sum_{n\neq 0} \langle e_n\vert f_-\rangle^2
\nonumber\\
&=& \frac{1}{1-\lambda_*^2} \sum_n   \langle e_n\vert f_-\rangle^2
= \frac{1}{1-\lambda_*^2}\,\vert\vert f_-\vert\vert^2 < \infty\ .
\label{3.17}
\eea
where we have used $\langle e_0\vert f_-\rangle = 0$. We therefore can interchange the summations in \eqref{3.16} and obtain
\be
\sum_{m=0}^{\infty} \left(K_0\right)^m \vert f_-\rangle =  \sum_{n\neq 0} \frac{1}{1-\lambda_n}\,\langle e_n\vert f_-\rangle \vert e_n\rangle\ .
\label{3.18}
\ee
Putting $\vert f_-\rangle = \vert\Gamma_1/\Gamma_0\rangle$ and using \eqref{3.18} in \eqref{3.15} shows that \eqref{3.13} is
a solution of \eqref{3.14}, and hence of \eqref{3.12b}, to lowest order in $T$. The most general solution is obtained by adding
the general solution of the corresponding homogeneous equation. However, by Corollary~\ref{corollary:3.1} the latter,
${\cal C}_0\vert\varphi_-\rangle = 0$, has only the null solution since the zero eigenfunction $e_0$ is even. Hence the solution
\eqref{3.13} is unique, which proves the proposition.
\qed
\end{proof}

We now are in a position to prove our final result.
\begin{theorem}
The transport relaxation time $\tau(T)$ in \eqref{1.1} obeys the Bloch law $\tau(T\to 0) \propto 1/T^5$ and is given, to leading
order in $T$, by $\tau(T) = \phi_+/2$ with
\be
\phi_+ = \frac{1}{\langle\Gamma_2\rangle_w + \sum_{m=0}^{\infty} \langle\Gamma_1 (K_0)^m \Gamma_1/\Gamma_0\rangle_w}\ ,
\label{3.19}
\ee
where $\langle \ldots \rangle_w = \intR d\epsilon\, \ldots w(\epsilon)/\intR d\epsilon\,w(\epsilon)$
denotes an average with weight $w$. 
\label{theorem:3.1}
\end{theorem}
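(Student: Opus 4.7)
The plan is to reduce Theorem~\ref{theorem:3.1} to a Fredholm solvability condition on the even-part equation~\eqref{3.12a}, using Lemma~\ref{lemma:3.3} and Proposition~\ref{proposition:3.3} as the principal inputs. First, I would insert $\varphi = \varphi_+ + \varphi_-$ into~\eqref{1.1b}. Because the weight $w$ is even in $\epsilon$, the odd contribution $\intR d\epsilon\,w(\epsilon)\varphi_-(\epsilon)$ vanishes, and by Lemma~\ref{lemma:3.3} the leading-order even part is the constant $\phi_+$. A short calculation from~\eqref{1.2} gives $\intR d\epsilon\,w(\epsilon) = T$, so $\tau(T) = \phi_+/2$ at leading order, which reduces the theorem to computing $\phi_+$.

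To determine $\phi_+$, I would apply the Fredholm alternative to~\eqref{3.12a} by pairing with $\langle 1|$, the zero mode of the self-adjoint operator $\mathcal{C}_0$. This annihilates the left-hand side and yields the scalar identity
\[
0 = -\langle 1|1/\Gamma_0\rangle - \langle 1|K_1|\varphi_-\rangle + \phi_+\,\langle 1|K_2|1\rangle .
\]
The outer two inner products are direct: $\langle 1|1/\Gamma_0\rangle = \intR d\epsilon\,w(\epsilon) = T$ and $\langle 1|K_2|1\rangle = \intR d\epsilon\,w(\epsilon)\Gamma_2(\epsilon) = T\langle\Gamma_2\rangle_w$. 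For the middle term I would use the skew-adjointness of $K_1$ on $L^2_\sigma$, a direct consequence of the antisymmetry relation~\eqref{2.1b}, to rewrite it as $-\langle \Gamma_1/\Gamma_0 \,|\, \varphi_-\rangle$, and then substitute the Neumann-series representation~\eqref{3.13}. The resulting double sum reproduces exactly $\sum_{m\ge 0}\langle\Gamma_1 (K_0)^m \Gamma_1/\Gamma_0\rangle_w$ once the $L^2_\sigma$ inner product is rewritten in terms of the normalized $w$-average (the compensating factor of $T$ cancels against $\intR d\epsilon\,w(\epsilon)$ on the other side). Solving the resulting linear equation for $\phi_+$ then produces~\eqref{3.19}. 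The Bloch scaling $\tau\propto T^{-5}$ is immediate from~\eqref{1.14}: both $\langle\Gamma_2\rangle_w$ and every term $\langle\Gamma_1(K_0)^m\Gamma_1/\Gamma_0\rangle_w$ scale as $T^5$ (with $K_0$ itself scale-free in $T$), so the denominator of~\eqref{3.19} is $O(T^5)$.

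At this point nearly all of the analytic content has been absorbed into the earlier propositions: compactness of $\hat K_0$ (Proposition~\ref{proposition:3.0}), the spectral gap at $\lambda=1$ and isolation of the zero mode of $\mathcal{C}_0$ (Proposition~\ref{proposition:3.1} and Corollary~\ref{corollary:3.1}), and the convergence of $\sum_m(K_0)^m$ on odd functions (Proposition~\ref{proposition:3.3}). The step that still demands genuine care, and which I expect to be the main obstacle, is verifying that the corrections dropped along the way are truly subleading: namely, the $K_2|\varphi_-\rangle$ term neglected in deriving Proposition~\ref{proposition:3.3}, and the non-constant deviations of $\varphi_+$ from $\phi_+|1\rangle$ in~\eqref{3.12a}. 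Both are formally of relative order $T^2$, and controlling them uniformly requires operator-norm bounds on $K_{1,2}$ of the same type as the Hilbert--Schmidt estimate in the proof of Proposition~\ref{proposition:3.0}, combined with the explicit prefactors $T/\omega_0$ and $(T/\omega_0)^2$ pulled out in~\eqref{1.13c}--\eqref{1.13d}.
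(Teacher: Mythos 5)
Your proposal is correct and follows essentially the same route as the paper: project the even-part equation \eqref{3.12a} onto the zero mode $\langle e_0\vert \propto \langle 1\vert$ of ${\cal C}_0$, evaluate $\langle 1\vert 1/\Gamma_0\rangle$ and $\langle 1\vert K_2\vert 1\rangle$ directly, handle the $K_1$ term via its skew-adjointness together with the Neumann series of Proposition~\ref{proposition:3.3}, and read off the $T^{-5}$ scaling from \eqref{1.14}. Your explicit evaluation $\intR d\epsilon\,w(\epsilon)=T$ and your closing caveat about uniformly controlling the dropped $O(T^2)$ corrections are slightly more careful than the paper's own write-up, but they do not change the argument.
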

\begin{proof}
Since  $w(\epsilon)$ is even, only $\varphi_+$ contributes to the conductivity, and by Lemma~\ref{lemma:3.3} $\varphi_+$,
to leading order for $T\to 0$, is a constant function that scales as $T^{-5}$. To determine the prefactor we multiply
\eqref{3.12a} from the left with the zero eigenfunction $\langle e_0\vert$. This yields
\bea
0 &=& -\langle e_0\vert 1/\Gamma_0\rangle - \langle e_0\vert K_1\vert\varphi_-\rangle + \phi_+ \langle e_0\vert K_2\vert 1\rangle
\nonumber\\
   &=& -\langle 1\vert 1/\Gamma_0\rangle - \langle \Gamma_1/\Gamma_0 \vert\varphi_-\rangle + \phi_+\langle \Gamma_2/\Gamma_0 \vert 1\rangle
\nonumber\\
   &=&  -\langle 1\vert 1/\Gamma_0\rangle + \phi_+ \sum_{m=0}^{\infty} \langle\Gamma_1/\Gamma_0 \vert (K_0)^m \vert\Gamma_1/\Gamma_0\rangle
           + \phi_+ \langle 1 \vert \Gamma_2/\Gamma_0\rangle
\label{3.20}
\eea
where we have used Proposition~\ref{proposition:3.3}. Solving for $\phi_+$ yields \eqref{3.19}.
\end{proof}
\begin{remark}
The formal solution in Sec.~\ref{sec:2} correctly reproduces the rigorous solution given above.
\label{remark:3.5}
\end{remark}

\section{Conclusion}
\label{sec:4}

We have provided a rigorous solution of the kinetic equation that describes the scattering of electrons by acoustic
phonons, under the Bloch assumption of phonon equilibrium, in the limit of low temperature. This kinetic equation
can be derived either from the linearized Boltzmann equation or from the Kubo formula. Our result establishes the
Bloch $T^5$ law for the electrical resistivity in three spatial dimensions as the exact solution of the Boltzmann equation 
at asymptotically low temperatures. We conclude with some additional remarks.
\begin{remark}
Our solution of the Boltzmann equation is rigorous. The Kubo formula yields the same result; however, no
rigorous derivation of \eqref{1.7} from the Kubo formula has been provided. The calculations that lead to
\eqref{1.7} from the full Kubo formula turn out to be exact to leading order in the scattering potential,
however, this fact has only been established by comparing with the solution of the Boltzmann equation.
See also Remarks~\ref{remark:1.9} and \ref{remark:1.9.1}.
\label{remark:4.1}
\end{remark}
\begin{remark}
The symmetry properties of the kernel $K_1$, which is skew-adjoint with respect to the scalar product $\langle\ \vert\ \rangle$,
ensure that its contribution to the scattering rate is proportional to $T^5/\omega_0^2\epsilonF^2$, rather than $T^4/\omega_0^2\epsilonF$
as suggested by naive scaling arguments based on \eqref{1.13} or \eqref{1.14}.
\label{remark:4.2}
\end{remark}
\begin{remark}
The skew-adjointness of $K_1$ is a result of rewriting the underlying integral equation for the rate $\phi$ defined in
\eqref{1.4a} into one for the inverse rate $\varphi$ defined in \eqref{1.4b} (recall that the collision operator $C$ in
\eqref{1.6} was self-adjoint). As another consequence, the zero eigenvalue of ${\cal C}_0$ reflects the approximate
momentum conservation in the limit $T\to 0$ mentioned in Remark~\ref{remark:1.5}.
\label{remark:4.3}
\end{remark}
\begin{remark}
The simple approximation that turns \eqref{1.7} into an algebraic equation by replacing the rates $\Gamma_{1,2,3}$ 
by their averages with respect to $w$ \cite{Mahan_2000} correctly yields the Bloch law, and reproduces the first term in the
denominator of \eqref{3.19}, but obviously misses the second term since $\langle\Gamma_1\rangle_w=0$.
\label{remark:4.4}
\end{remark}
\begin{remark}
One can formally continue the expansion in powers of $\alpha$ in Sec.~\ref{sec:2}. Symmetry ensures that only even powers 
of $\alpha$ contribute to the eigenvalue, and the leading correction to the Bloch law scales as $T^{-3}$.
\label{remark:4.5}
\end{remark}
\begin{remark}
Scattering by excitations other than phonons can be analyzed analogously. Consider propagating (i.e., particle-like)
excitations with an en\-er\-gy-momentum relation $\omega_{\bm q} \propto \vert{\bm q}\vert^{\alpha}$ that lead to an
effective potential with spectrum
\be
V''({\bm p},u) \propto (\omega_{\bm p})^{\beta} \left[\delta(\omega_{\bm p} - u) - \delta(\omega_{\bm p} + u)\right]
\label{4.1}
\ee
with $\alpha, \beta > 0$ (for phonons, $\alpha = \beta = 1$). The relaxation rates then scale as
\bse
\label{4.1}
\bea
\Gamma_0 &\sim& T^{2/\alpha + \beta}
\label{4.1a}\\
\Gamma_1 &\sim& T^{2/\alpha + \beta + 1}
\label{4.1b}\\
\Gamma_2 &\sim& T^{4/\alpha + \beta}
\label{4.1c}
\eea
\ese
From \eqref{2.11} or \eqref{3.19} it follows that $\Gamma_1$ provides the leading temperature
dependence of the conductivity if $\alpha<1$, and does not contribute to the leading behavior
if $\alpha >1$. The phonon case with $\alpha=1$ is the marginal one where $\Gamma_2$ and
$\Gamma_1$ both contribute to the leading behavior, albeit the latter with a small prefactor. 
\label{remark:4.6}
\end{remark}
\begin{remark}
An explicit example is provided by ferromagnetic
magnons, for which $\alpha=2$, $\beta=0$, but the spectrum has a gap due to the exchange splitting $T_0$.  For  $\vert u\vert>T_0$
one then has ${\bar V}_0''(u) \propto \sgn u$ \cite{Ueda_Moriya_1975,Bharadwaj_Belitz_Kirkpatrick_2014}. The relaxation rates
scale as $\Gamma_0 \sim T\ln T$, $\Gamma_1 \sim \Gamma_2 \sim T^2$. In agreement with the general argument
in Remark~\ref{remark:4.6}, $\Gamma_1$ does not contribute to the leading temperature dependence, and the conductivity, 
for $T>T_0$, behaves as $\sigma \propto 1/\Gamma_2 \propto T^{-2}$.
\label{remark:4.7}
\end{remark}
\begin{remark}
Another example is the case of Coulomb scattering, which leads to ${\bar V}_0''(u) \propto u$. This excitation is not particle-like;
it is characterized
by a continuous spectrum rather than a $\delta$-function as in the phonon and magnon cases. As a result the rates scale as
$\Gamma_0 \sim \Gamma_2 \sim T^2$, $\Gamma_1 \sim T^3$. $\Gamma_1$ again does not contribute to the
leading behavior, and for the conductivity one obtains the well-known Fermi-liquid result $\sigma \propto T^{-2}$. 
\label{remark:4.8}
\end{remark}
\begin{remark}
For both of the examples in Remarks~\ref{remark:4.7},~\ref{remark:4.8} the proof in Sec.~\ref{sec:3}, with appropriate modifications, 
still holds.  For the ferromagnetic case at asymptotically low temperatures, $T<T_0$, the exchange gap leads to complications that 
require a separate analysis.
\label{remark:4.9}
\end{remark}

\acknowledgement
We thank Bob Dorfman and Peng Lu for discussions. 
\bibliographystyle{spmpsci}      

%
%

\end{document}